\documentclass[
    aps,
    pra,
    twocolumn,
    amsmath,
    amssymb,
    nofootinbib,
    superscriptaddress]{revtex4-1}

\usepackage{physics}

\newcommand{\set}[1]{\left\lbrace #1\right\rbrace}

\usepackage[colorlinks]{hyperref} 
\usepackage[pdftex]{graphicx}
\usepackage{bbm}
\usepackage{amsthm}
\usepackage{cleveref}
\theoremstyle{definition}

\usepackage[usenames,dvipsnames]{xcolor}
\usepackage{framed}

\newtheorem{proposition}{Proposition}
\newtheorem{theorem}{Theorem}

\newtheorem{example}{Example}

\usepackage[normalem]{ulem} 
\usepackage{soul}

\hypersetup{
    colorlinks=true, 
    linktoc=all,     
    linkcolor=blue,  
    citecolor=blue,
    filecolor=blue,
    urlcolor=blue
}

\DeclareMathOperator*{\argmin}{argmin}
\DeclareMathOperator*{\argmax}{argmax}


\newcommand{\PhysMQ}{\affiliation{%
    Department of Physics and Astronomy,
    Macquarie University,
    Sydney NSW, Australia.}}
    
\newcommand{\PhysSU}{\affiliation{%
    Department of Physics,
    Stockholm University,
    Stockholm, Sweden.}}
    
\newcommand{\PhysUQ}{\affiliation{%
    School of Mathematics and Physics,
    University of Queensland,
    St. Lucia QLD, Australia.}}

\newcommand{\EQuSMQ}{\affiliation{%
    Centre for Engineered Quantum Systems,
    Macquarie University,
    Sydney NSW, Australia.}}

\newcommand{\EQuSUQ}{\affiliation{%
    Centre for Engineered Quantum Systems,
    University of Queensland,
    St. Lucia QLD, Australia.}}

\begin{document}

\title{Quantum State Discrimination as Bayesian Experimental Design}

\author{Thomas Guff}\email[]{thomas.guff@students.mq.edu.au} \PhysSU \PhysMQ \EQuSMQ
\author{Yuval R.\ Sanders} \PhysMQ  \EQuSMQ
\author{Nathan A.\ McMahon} \PhysUQ \EQuSUQ
\author{Alexei Gilchrist} \PhysMQ \EQuSMQ

\date{\today}

\begin{abstract}
We show that quantum state discrimination sits neatly in the framework of Bayesian experimental design. 
In this setting, the two main branches of quantum state discrimination (minimal error and maximal confidence) simply correspond to two different utility functions.
This view allows straightforward extensions and mixing of different discrimination tasks by examining the utility functions, and to describe multi-objective discrimination tasks. In addition, the probability of success and the total confidence are resource monotones quantifying the usefulness of measurements.
We give general conditions under which utility functions lead to resource monotones in the resource theory of quantum measurement. 
\end{abstract}

\maketitle

\section{Introduction}\label{sec:intro}

Quantum state discrimination~\cite{chefles_review_2000, barnett_review_2009,
bae_review_2015} is a foundational task in quantum information theory in which
an agent tries to distinguish between a collection of quantum states. If the states are
not pairwise orthogonal, then no measurement will be able to perfectly distinguish them.
Usually, one then seeks to find the \emph{optimal} measurement to discriminate between a given set of quantum states. 

Quantum state discrimination is emerging as significant in the field of resource theories as well.
Resource theories order states of a system in terms of their resourcefulness with respect to a particular task.
Discrimination tasks have shown to be useful in characterising resources in general resource theories \cite{Takagi2019}.
The degree of resourcefulness is quantified by real-valued functions known as resource monotones, which can be generated from state discrimination tasks \cite{Takagi2019,Skrzypczyk2019,Skrzypczyk2019a,Guff2019}.
In this context, quantum state discrimination is not necessarily used to find the `best' measurement, but to quantify the usefulness of any  (even a suboptimal) measurement.

Quantum state discrimination is typically thought to have two primary branches.
In the first, known as \emph{minimal error state discrimination}, the agent
attempts to report the correct state while minimising the probability of an
error or equivalently, maximising the overall success. In the second, known as
\emph{maximal confidence state discrimination}, the agent tries to maximise the
certainty (or confidence) in each of their measurement outcome, granting the additional ability to discard any measurement result.
Both these
cases involve the agent trying to best distinguish between quantum states, but they
differ in their respective notions of what constitutes `best'.

In this paper, we reframe both kinds of quantum state discrimination as a Bayesian experimental design problem.
This is a very general framework which is designed to select the experiment which maximises a chosen utility function. This utility function encodes what an agent most values in the experiment.
Quantum state discrimination naturally fits within this framework, where in place the best experiment, the agent searches for the best quantum measurement to discriminate between quantum states.
We show that both minimal error and maximal confidence state discrimination are Bayesian experimental design tasks, distinguished only by the choice of utility functions. 

This allows us to generalise quantum state discrimination tasks to other utility functions.
We provide a natural example of a utility function that leads to the mutual information between the measurement outcomes and the transmitted states as the function the agent wants to optimise.
The mutual information is a common measure of the average information extracted from a quantum system by a measurement device.

It also allows the description of multi-objective utility functions. These are useful when we may wish to optimise an experiment in multiple stages, which may be required to identify a unique optimal measurement.
We look at a well-known example of this for the case when there is a heavy bias towards one particular state. 
In this case a single stage of optimisation assigns the same utility value to all measurements.

The framework of Bayesian experimental design makes explicit the decision an agent will make after receiving the measurement result but before concluding which state they received.
In typical treatments of quantum state discrimination, this is usually subsumed into the measurement itself. 
We keep these notions separate, as it allows us to explore a wider range of POVMs and utility functions.

The probability of success, which arises when studying minimal error state discrimination, is a resource monotone in the resource theory of measurement \cite{Guff2019}. 
In fact, the probability of success over all possible discrimination alphabets forms a (uncountable) complete family of resource monotones \cite{Skrzypczyk2019,Guff2019}. 
This means that discrimination tasks can completely characterise the resource theory of quantum measurements.
Maximal confidence state discrimination also leads to a resource monotone, which we call the total confidence.
The question then arises as to which utility functions will give rise to resource monotones.
We provide three simple conditions on the utility function, which if satisfied, lead to resource monotones for quantum measurements.
When these conditions are satisfied there exists an optimal measurement with only rank-1 POVM elements. 
Thus in the future, it might be possible to use a broader range of utility functions to create resource monotones to better characterise the resource theory of quantum measurements.

\section{Quantum State discrimination}
\label{sec:formalism}

We first present the task of quantum state discrimination as it is commonly found in the literature \cite{chefles_review_2000, barnett_review_2009,
bae_review_2015}. 
In particular we expound the two primary branches: minimal error and maximal confidence state discrimination.
 As it will become necessary in the framework of Bayssian experimental design, we make explicit the decision strategy involved in selecting a quantum state; although in standard treatments it is not necessary to consider the decision explicitly.

The task of quantum state discrimination can be framed as a communication protocol between two parties.
Suppose one party, Alice, wishes to send a classical message to another, Bob.
Alice will choose one of $n$ distinct messages (labelled $\nu \in\mathcal{S}_{\nu}=  \set{1, \ldots, n}$)
and encode that message in an alphabet of quantum states, all of the same (finite) dimension, so that message $\nu$ is
encoded as density operator
$\rho_\nu$. Alice will then send $\rho_\nu$ to Bob using a noiseless quantum
communication channel. Bob knows that Alice intends to send message $\nu$ with
probability $q_\nu>0$ ($\sum_{\nu=1}^{n} q_\nu = 1$). Bob's task is to determine Alice's
message once he has received her state.

In order to infer Alice's message, Bob measures the received state.
His choice of measurement $\alpha$ can be represented with a positive operator-valued
measure (POVM). That is, for every measurement outcome $y\in  \mathcal{S}_{y} = \set{1,\dots, m}$, there is a positive semi-definite linear operator $A_{y}$, such that 
\begin{equation}
   \forall y:\ A_y \geq 0, \quad
    \sum_{y=1}^{m} A_y = \mathbbm{1}.
\end{equation}
Bob wants the best measurement to discriminate between the possible states Alice could send. Determining the best POVM is the primary task of quantum state discrimination. There are two main paradigms of quantum state discrimination corresponding to different notions of what Bob considers the `best' POVM.
Specifically, these are when Bob values making as few mistakes as possible, or having the largest possible confidence in each measurement outcome.

\subsubsection{Minimal Error State Discrimination}\label{sec:minerr}

When the aim is to make as few mistakes as possible when reading Alice's messages, Bob will choose the POVM that minimises the average probability of error, or equivalently, the POVM that maximises the average probability of success.
Let $N$ represent the possible states sent by Alice in one round of communication. Further, let $\alpha$ represent the measurement that Bob uses on the state Alice sent. We use the notation $N_{\nu}$ to represent the proposition that the state associated to message $\nu$ was sent by Alice. Similarly we use the notation $\alpha_{j}$ to represent the proposition that Bob's measurement $\alpha$ obtained the result $j$, from which he will conclude that Alice sent message $j$. 
With this notation, the probability of success is
      \begin{equation}
          \sum_{\nu=1}^{n} p\left(\alpha_{\nu}|N_{\nu}\right)p\left(N_{\nu}\right) = \sum_{\nu=1}^{n} q_\nu \tr \left( A_\nu \rho_\nu \right). 
          \label{eq:avesuccess}
      \end{equation}
Explicitly, this function is the average probability that Bob's measurement will display the correct outcome.
Finding the POVM which maximises the average probability of success is  \emph{minimal error quantum state discrimination}. Analytic solutions for minimal error discrimination are known only for a few
special cases such as $n=2$~\cite{helstrom_book_1976} or when the alphabet
exhibits specific symmetries \cite{Chou_mirror_2004, Andersson_mirror_2002}.
Necessary and sufficient conditions for the optimal POVM have been known
for some time \citep{helstrom_book_1976, barnett_conditions_2009}.

Focusing on the measurement, Eq.~\eqref{eq:avesuccess} is also \emph{measure of
the quality} of a particular POVM in the context of this particular discrimination task. 
That is, 'better' measurements are those which have a higher probability of success within this particular state discrimination task.
Here the `best' measurement has been determined to be the one that maximises the probability of success.

When defining the probability of success \eqref{eq:avesuccess}, we assumed that the number of measurement outcomes was the same as the number of states in the alphabet being sent to Bob by Alice.
This assumption arises from the implicit presumption that when Bob receives measurement outcome $j$, he decides that Alice sent message $j$.
To generalise this to POVMs with arbitrary numbers of
measurement outcomes, we need to make explicit the \emph{decision} that Bob will make when he receives a measurement outcome.

We make this explicit by defining a \emph{decision strategy} $\Gamma:\mathcal{S}_{y}\rightarrow \mathcal{S}_{\nu}$ as any function that maps measurement outcomes to the alphabet of Alice's possible  messages.
Let $D_{j}$ be the proposition that Bob's decision was that Alice sent $\rho_{j}$ and thereby wishes to communicate message $j$. Then the probability of success is the average probability that Bob's \emph{decision} coincides with the state that was sent to him,
\begin{equation}
\sum_{\nu=1}^{n} p\left(D_{\nu}|N_{\nu}\right)p\left(N_{\nu}\right). \label{eq:decsucc}
\end{equation}
The decision strategy $\Gamma$ may map multiple measurement outcomes from $\mathcal{S}_{y}$, to the same decision, $\nu$.
However we can combine the POVM elements corresponding to each decision without reducing the probability of success.
To see this note that if Bob makes decision $\nu$, then it must have been that his measurement device displayed a result which would, given the decision strategy $\Gamma$, have caused that decision. 
That is to say, the probability that Bob receives outcome $y$ and makes decision $\nu$ is
\begin{equation}
p\left(D_{\nu},\alpha_{y}|N_{\nu}\right) = p\left(\alpha_{y}|N_{\nu}\right)\delta_{\Gamma\left(y\right),\nu}
\end{equation}
With this, we can expand \eqref{eq:decsucc} over the $m$ measurement outcomes (where $m$ may not equal $n$),
\begin{align}
&\sum_{\nu=1}^{n} p\left(D_{\nu}|N_{\nu}\right)p\left(N_{\nu}\right) \nonumber \\
&=\sum_{\nu=1}^{n} \sum_{y=1}^{m} \;p\left(D_{\nu},\alpha_{y}|N_{\nu}\right)p\left(N_{\nu}\right) \nonumber \\
&= \sum_{\nu=1}^{n} \sum_{y=1}^{m}  p\left(\alpha_{y}|N_{\nu}\right)p\left(N_{\nu}\right) \delta_{\Gamma\left(y\right),\nu} \nonumber \\
&=\sum_{\nu=1}^{n} \; \sum_{y\in\Gamma^{-1}(\nu)} p\left(\alpha_{y}|N_{\nu}\right)p\left(N_{\nu}\right),\label{eq:multisucc}
\end{align}
where $\Gamma^{-1}(\nu)$ is the set of all measurement outcomes that leads Bob to decide $\nu$ given strategy $\Gamma$.
Since $p\left(\alpha_{y}|N_{\nu}\right)=\tr(A_{y}\rho_{\nu})$ is linear in the
POVM element, we can define a new measurement $\beta$ with POVM elements $B_{\nu} = \sum_{y\in\Gamma^{-1}(\nu)} A_{y}$, which will have the same probability of success as \eqref{eq:multisucc} (when using the corresponding decision strategy). Thus within the confines of minimal error state discrimination it is not necessary to consider measurement devices with more than $n$ outcomes; and this is seen in the literature. 
However to interpret it in the framework of Bayesian experimental design, we need to separate the measurement outcome from the decision strategy.

Even though, Bob may not have the optimal measurement device, he will still want to use the best decision strategy given his measurement device.
That is, his probability of success should be maximised over all decision strategies
\begin{equation}
\max_{\Gamma} \sum_{\nu=1}^{n} p\left(D_{\nu}|N_{\nu}\right)p\left(N_{\nu}\right).
\end{equation}
It is simple to show that the best decision strategy, in the case of minimal error state discrimination, is to always choose the state with the largest posterior probability after the measurement. To demonstrate this, let us assume a particular decision strategy $\Gamma$ and expand \eqref{eq:decsucc} in terms of the measurement outcomes
\begin{equation}
\sum_{y=1}^{m} \sum_{\nu=1}^{n} p\left(D_{\nu}|\alpha_{y},N_{\nu}\right)p\left(\alpha_{y}|N_{\nu}\right)p\left(\rho_{\nu}\right).\label{eq:expandsucc}
\end{equation}
Since the decision depends solely on the measurement outcome, we have $p\left(D_{\nu}|\alpha_{y},N_{\nu}\right)=p\left(D_{\nu}|\alpha_{y}\right)$, and $p\left(D_{\nu}|\alpha_{y}\right)$ represents the decision strategy. Since the decision strategy is a deterministic function of the measurement outcomes, the probabilities $p(D_{\nu}| \alpha_{y})$ are either $1$ or $0$. 

If we define $\rho_{y^{\star}}$ as the state with the largest posterior probability given measurement outcome $y$,
\begin{equation}
y^\star := \argmax_{\nu\in\set{1,\dots,n}} p\left(N_{\nu}|\alpha_{y}\right),
\end{equation}
then, applying Bayes' rule, we have
\begin{align}
\sum_{y=1}^{m} \sum_{\nu=1}^{n} &\;p\left(D_{\nu}|\alpha_{y}\right) p\left(N_{\nu}|\alpha_{y}\right)p\left(\alpha_{y}\right) \nonumber \\
&\leq \sum_{y=1}^{m} \sum_{\nu=1}^{n} p\left(D_{\nu}|\alpha_{y}\right) p\left(N_{y^\star}|\alpha_{y}\right)p\left(\alpha_{y}\right) \nonumber \\
&= \sum_{y=1}^{m} p\left(N_{y^{\star}}|\alpha_{y}\right)p\left(\alpha_{y}\right). \label{eq:bestme}
\end{align}
This inequality can clearly can be saturated since the last line corresponds to the probability of success when the decision strategy was to choose the state with the largest posterior probability; that is, $p\left(D_{\nu}|\alpha_{y}\right) = \delta_{\nu,y^{\star}}$.

\subsubsection{Maximal Confidence State Discrimination}

If the priority is having the maximal possible assurance that each measurement outcome is accurate, a different POVM will be considered the `best' or optimal measurement.

If the states in Alice's alphabet are mutually orthogonal then there exists a measurement that can discriminate between them with certainty and without error. 
However, for certain alphabets which aren't necessarily mutually orthogonal, Bob can choose a POVM such that each element is orthogonal to all but one of the possible states. 
In this case there is a subspace in the support of each state $\rho_{\nu}$ that is not in the support of any other state Alice could send.
Hence the states $\rho_{1},\dots,\rho_{n}$ will be linearly independent.
If this subspace condition holds then one can construct a POVM such that  $p\left(N_{\nu}|\alpha_{y}\right) = \delta_{\nu,y}$ for all $y$. That is, 
Bob is certain about which state he was sent irrespective of which measurement outcome, $y\in \{1,\dots, m\}$, he received.
This is known as \emph{unambiguous state discrimination}
\cite{Peres_unambiguous_1988, Ivanovic_unanbiguous_1987, dieks_overlap_1988,
Jaeger_unambiguous_1995, Eldar_sdp_2004}. 
For an example, see example \ref{ex:unambig} below.

For non-orthogonal alphabets, unambiguous state discrimination can only be done at the cost of adding an `inconclusive' outcome $A_{0}$ to ensure the POVM sums to unity. 
By convention when Bob obtains the inconclusive outcome, he does not make a decision as to which state was sent.
Unambiguous state discrimination is possible for $n$ pure states if and only if they are linearly independent~\cite{chefles_unambiguous_1998}. 
Solutions for unambiguous state discrimination exist for up to three linearly independent pure states \cite{Peres_three_1998}, and $n$ linearly independent symmetric states \cite{chefles_optimum_1998}.

The generalisation to all alphabets is known as \emph{maximal confidence measurements} \cite{croke_maximum_2006}, where Bob may no longer have certainty in each measurement outcome, but will be as certain as he possibly can be; that is he will have the most confidence. Here Bob constructs the following measurement $\gamma$.
For each message $\nu = 1, \dots, n$, he chooses POVM element $A_\nu$ to associate to the measurement outcome $\nu$, such that it maximises $p( N_\nu | \gamma_\nu )$,
\begin{equation}
A_{\nu} = \argmax_{0< A \leq \mathbbm{1}}\; p\left(N_{\nu}|\gamma_{\nu}\right) = \argmax_{0< A \leq \mathbbm{1}} \frac{q_{\nu}\tr(\rho_{\nu}A)}{\tr\left(A\left(\sum_{\nu=1}^{n}q_{\nu}\rho_{\nu}\right)\right)}. \label{eq:maxconf}
\end{equation}
Here we are maximising over all POVM elements $A$ for which $p\left(\gamma_{\nu}\right)=\tr\left(A\left(\sum_{\nu=1}^{n}q_{\nu}\rho_{\nu}\right)\right)\ne 0$. These are the positive semi-definite operators that are not orthogonal to every state in Alice's alphabet.
He then defines the inconclusive outcome $A_{0} := \mathbbm{1} - \sum_{\nu=1}^{n} A_\nu$. Since the trace function is linear, $p\left(N_{\nu}|\gamma_\nu\right)$ is unchanged when $A_{\nu}$ is rescaled by a constant. Hence the elements $A_{\nu}$ can be rescaled to ensure that $A_{0}$ is positive semi-definite.

This ability to rescale elements implies that there is a (continuous) family of POVMs that maximises \eqref{eq:aggconf}.  
To select a unique member of this family, Bob typically performs another optimisation: choosing, for example, the POVM from this family that minimises the probability of obtaining the inconclusive outcome, $p(\gamma_{0})$.
This is an example of multi-objective optimisation, in which a secondary optimisation occurs subject to the constraint of a previous optimisation. 
We will revisit this in section~\ref{sec:multiobj}.

Unlike the best minimal error quantum measurement, which maximises the probability of success \eqref{eq:avesuccess}, the maximal confidence measurement is constructed one element at a time. Thus is it unclear how to measure the quality, or `confidence', of a suboptimal measurement $\alpha$ with outcomes $y=0,1,\dots,n$. We have not found such a measure in the literature, but we believe it to be intuitive and reasonable to use the sum of the confidences of each outcome
\begin{equation}
\sum_{\nu=1}^{n} p\left(N_{\nu}|\alpha_{\nu}\right),\label{eq:aggconf}
\end{equation}
where if $p\left(\alpha_{\nu}\right)=0$ then we define $p\left(N_{\nu}|\alpha_{\nu}\right)=0$. We call this measure the \emph{total confidence}.
Clearly, the POVM constructed in \eqref{eq:maxconf} independently optimise each term in this sum, thereby maximising the total sum.

Once again, we would like to consider more general POVMs by separating the measurement outcome from the decision Bob makes after his measurement. 
In this case we must include an `inconclusive' decision, so now the decision strategy $\Gamma$ maps into the set $\mathcal{S}_\nu = \set{0,\dots,n}$. 
The quality of our measurement device is the probability that Alice sent $\rho_{\nu}$ given Bob decided on $\rho_{\nu}$. Using the notation from sec.~\ref{sec:minerr}, this is
\begin{align}\label{eq:confdec}
\sum_{\nu=1}^{n} p\left(N_{\nu}|D_{\nu}\right) &= \sum_{\nu=1}^{n} \frac{p\left(D_{\nu}|N_{\nu}\right)p\left(N_{\nu}\right)}{p\left(D_{\nu}\right)} \nonumber \\ 
&= \sum_{\nu=1}^{n} \frac{\sum_{y\in\Gamma^{-1}\left(\nu\right)}p(\alpha_{y}|N_{\nu}) p\left(N_{\nu}\right)}{\sum_{y\in\Gamma^{-1}\left(\nu\right)}p( \alpha_{y})} \nonumber \\ 
&= \sum_{\nu=1}^{n} p(N_{\nu}|\textstyle{\sum_{y\in\Gamma^{-1}\left(\nu\right)}} \alpha_{y}),
\end{align}
where the summation of propositions denotes logical disjunction: $\alpha_{1}+\alpha_{2}+\dots$ means $\alpha_{1}$ \emph{or} $\alpha_{2}$ \emph{or} $\dots$.

Once again, Bob wishes to use the best decision strategy to maximise the total confidence of his measurement,
\begin{equation}
\max_{\Gamma} \sum_{\nu=1}^{n} p\left(N_{\nu}|\alpha_{\nu}\right). \label{eq:totalconf}
\end{equation}
Unlike minimal error state discrimination, Bob's best strategy here is not to choose the state with the largest posterior probability. 
The best decision strategy will be to assign a single measurement outcome to each decision and to make no decision on all remaining measurement outcomes. 
This is because making the same (conclusive) decision on two measurement outcomes can never do better than making that decision on only one of them. To show this consider a decision strategy $\Gamma$ such that $\Gamma\left(i\right) = \Gamma\left(j\right) = \nu \ne 0$, where $i\ne j$. Then if we examine we  $p\left(N_{\nu}|D_{\nu}\right)$, we can rewrite it as
\begin{align}
p\left(N_{\nu}|D_{\nu}\right) &=  p\left(N_{\nu}|\alpha_{i}+\alpha_{j}\right) = \frac{p\left(\alpha_{i}+\alpha_{j}|N_{\nu}\right)p\left(N_{\nu}\right)}{p\left(\alpha_{i}+\alpha_{j}\right)} \nonumber \\
&=\frac{p\left(N_{\nu}|\alpha_{i}\right)p\left(\alpha_{i}\right)+p\left(N_{\nu}|\alpha_{j}\right)p\left(\alpha_{j}\right)}{p\left(\alpha_{i}+\alpha_{j}\right)}.
\end{align}
Since this is an weighted average of $p\left(N_{\nu}|\alpha_{i}\right)$ and $p\left(N_{\nu}|\alpha_{j}\right)$ we have
\begin{equation}
p\left(N_{\nu}|D_{\nu}\right) \leq \max\set{p\left(N_{\nu}|\alpha_{i}\right),p\left(N_{\nu}|\alpha_{j}\right)}.\label{eq:mcbestdec}
\end{equation}
In other words, a better strategy would have been to have made decision $\nu$ on the strategy on the measurement outcome with the larger confidence, and to make no decision on the other. Of course if $p\left(N_{\nu}|\alpha_{i}\right)=p\left(N_{\nu}|\alpha_{j}\right)$, then
\begin{equation}
p(N_{\nu}|D_{\nu})=p\left(N_{\nu}|\alpha_{i}\right)=p\left(N_{\nu}|\alpha_{j}\right)=p\left(N_{\nu}|\alpha_{i}+\alpha_{j}\right)
\end{equation}
and the confidence in decision $\nu$ is unaffected. 

Including Bob's decision allows us to consider POVMs with more elements than the number of messages in Alice's alphabet, now we can re-interpret quantum state discrimination in the language of Bayesian experimental design. However, just like with minimal state discrimination, this does not allow Bob to design a more optimal POVM than the standard approach allows. 

\section{Bayesian experimental design}
\label{sec:BED}

Here we describe the general formalism of Bayesian experimental design, before showing that quantum state discrimination straightforwardly fits into this framework. The two different paradigms mentioned in the previous section will simply correspond to different utility functions

\subsection{General Formalism}
\label{ssec:formalism}

Bayesian experimental design is a framework for deriving the best experiment for a particular task, where the notion of `best' is encoded into a utility function $U$.

An intuitive introduction to Bayesian experimental design is through Bayesian
decision theory \cite{ProbJaynes2003}, which involves both inference and optimisation.
We begin with a problem in which we have an unknown variable $N$ which can take on any value $\nu$ from the set $\mathcal{S}_{\nu}$, of which we would usually like to know more.
We will assume that all sets are finite unless stated otherwise.
There will be some background information $\epsilon\in\mathcal{S}_{\epsilon}$ about the problem (for example the particular experiment being performed), and there will some relevant data $y\in\mathcal{S}_{y}$ about $N$ such as experimental or measurement results.
We use $\epsilon_{y}$ to represent the proposition that the experiment $\epsilon$ generated data $y$.
Our knowledge of the variable $N$ is then captured by the conditional probability distribution $p(N_{\nu}|\epsilon_{y})$; where we have used the notation $N_{\nu}$ to stand for the proposition that $N=\nu$.

The experimenter will then make a decision, represented by variable $D$. The proposition that the particular decision $d$ from a set of possible decisions $\mathcal{S}_{d}$ is chosen is represented by $D_{d}$. 
Usually this is a decision as to the value of $N$. What we value in this experiment, which depends on all the parameters for generality, is encoded in a utility function
\begin{equation}
U:\mathcal{S}_{\epsilon}\times\mathcal{S}_{y}\times\mathcal{S}_{d}\times\mathcal{S}_{\nu}\rightarrow \mathbb{R}^{n},
\end{equation}
where $\mathbb{R}^{n}$ is considered as a vector space, and totally ordered under the \emph{dictionary order} (also known as the \emph{lexicographic order}). For example, in $\mathbb{R}^{2}$, $\left(a,b\right) > \left(c,d\right)$ if $a > c$ or if $a = c$ and $b > d$.

The objective in Bayesian decision theory is then to pick the optimal decision $d^\star$.  
The optimisation process should clearly be weighted by $p\left(N_{\nu}|\epsilon_{y}\right)$, which the best knowledge we have of $\nu$. So for a given datum $y$, the optimal decision should have the largest average utility over $P(N_{\nu}|\epsilon_{y})$. That is,
\begin{align}
    d^\star &= \argmax_{d\in\mathcal{S}_d} 
            \sum_{\nu\in\mathcal{S}_{\nu}} U(\epsilon,y,d,\nu)\,p(N_{\nu}|\epsilon_{y}).
            \label{eq:bestdec}
\end{align}
So the optimal decision yields a utility of
\begin{align}
U(\epsilon, y) = \max_{d\in\mathcal{S}_{d}}
  \sum_{\nu\in\mathcal{S}_{\nu}} U(\epsilon,y,d,\nu)\,p(N_{\nu}|\epsilon_{y}).
\end{align}

The extension to experimental design is then straightforward: we desire the best experiment $\epsilon^\star$ from a set of possible experiments $\mathcal{S}_{\epsilon}$, averaged over all possible data $y\in \mathcal{S}_{y}$ we could have received,
\begin{align}
    \epsilon^\star &= \argmax_{\epsilon\in\mathcal{S}_\epsilon} \sum_{y\in\mathcal{S}_y} p(\epsilon_{y})\, \max_{d\in\mathcal{S}_d} 
    \sum_{\nu\in\mathcal{S}_\nu} U(\epsilon,y,d,\nu) \,p(N_{\nu}|\epsilon_{y}) \nonumber \\
    &=\argmax_{\epsilon\in\mathcal{S}_\epsilon} \sum_{y\in\mathcal{S}_y} \max_{d\in\mathcal{S}_d} \sum_{\nu\in\mathcal{S}_\nu} p(N_{\nu},\epsilon_{y}) \,U(\epsilon,y,d,\nu).\label{eq:oldbed}
\end{align} 
The last equation is in the form presented by the influential review
\cite{95chaloner273}, and is notable for explicitly including the intermediate decision optimisation. 

Instead of simply maximising over the possible decisions as an intermediate step, we wish to generalise this to optimise over all decision strategies more generally. A \emph{decision strategy} $\Gamma:\mathcal{S}_{y}\rightarrow \mathcal{S}_d$ is a function from the set of possible data that could have been received $\mathcal{S}_{y}$ to the set of possible decisions $\mathcal{S}_{d}$. The set of all possible decision strategies $\mathcal{S}_{\Gamma}=\left.\mathcal{S}_{d}\right.^{\mathcal{S}_{y}}$ is thus the set of all functions from $\mathcal{S}_{y}$ to $\mathcal{S}_{d}$.
The utility function $U$ is now
\begin{equation}
U:\mathcal{S}_{\epsilon}\times\mathcal{S}_{y}\times\mathcal{S}_{\Gamma}\times\mathcal{S}_{\nu}\rightarrow \mathbb{R}^{n}
\end{equation}
The expression for the best experiment, is now generalised to
\begin{equation}
\epsilon^\star = \argmax_{\epsilon\in\mathcal{S}_\epsilon} \,\max_{\Gamma\in\mathcal{S}_{\Gamma}}\sum_{y\in\mathcal{S}_y} \sum_{\nu\in\mathcal{S}_\nu} p(N_{\nu},\epsilon_{y})\, U(\epsilon,y,\Gamma,\nu). \label{eq:bed}
\end{equation}

This generalised expression returns to \eqref{eq:oldbed} in the case in which the utility function depends not on the entire decision strategy $\Gamma$, but only on the particular decision made $d=\Gamma\left(y\right)$; in which case $U(\epsilon,y,\Gamma,\nu) \equiv U(\epsilon,y,d,\nu)$.
This occurs when the best decision strategy is to simply make the optimal decision for each measurement outcome individually.
Here the best decision strategy is to choose $\Gamma(y)$ such that
\begin{equation}
\Gamma\left(y\right) = \argmax_{d\in S_{d}} \sum_{\nu\in S_{\nu}} p(N_{\nu},\epsilon_{y})\, U(\epsilon,y,d,\nu).
\end{equation}
This implies that the optimal measurement satisfies
\begin{equation}
\argmax_{\epsilon\in\mathcal{S}_\epsilon} \sum_{y\in\mathcal{S}_y} \max_{d\in\mathcal{S}_{d}} \sum_{\nu\in\mathcal{S}_\nu} p(N_{\nu},\epsilon_{y})\, U(\epsilon,y,d,\nu),
\end{equation}
which is the less general form we introduced earlier \eqref{eq:oldbed}.
We saw earlier that the best decision strategy was to choose the state largest posterior probability individually for each measurement outcome. We will see that minimal error can be derived from a utility function which depends only on the particular decision made. In contrast, we saw that the best decision strategy for the maximal confidence measurements is not to choose the state for which Bob has the largest confidence after each measurement outcome. This implies that maximal confidence cannot be derived from such a utility function.


\subsection{Application to State Discrimination}%
\label{ssec:specialisation_to_state_discrimination}

This framework can be specialised to both minimal error and maximal confidence quantum state discrimination. In both cases, rather than searching for the optimal experiment we are searching for the optimal quantum measurement.
Thus the set of experiments $\mathcal{S}_\epsilon$ is the (infinite) set of all POVMs. More precisely, let $\mathcal{S}^{(m)}_{\epsilon}$ be the set of all $m$-tuples of positive semi-definite linear operators (all defined on the same finite-dimensional Hilbert space)
\begin{equation}
(A_{1}, \dots, A_{m})
\end{equation}
such that $\sum_{i=1}^{m} A_{i} = \mathbbm{1}$. Then the set of all POVMs $\mathcal{S}_{\epsilon}$ is
\begin{equation}
\mathcal{S}_{\epsilon} = \bigcup_{m=1}^{\infty} \mathcal{S}^{(m)}_{\epsilon}.
\end{equation}
We could instead choose to optimise over consider a restricted subset of POVMs, this restriction may have physical motivations due to an experimenter's laboratory set-up or technological limitations.
However for the purposes of this paper, we will allow the experimenter access to all POVMs, as is typical in the literature.

The set of possible output data is the possible outcomes of the particular POVM $\epsilon$.
If $m$ is the number of elements in the particular POVM $\epsilon$, then $\mathcal{S}_{y}=\set{1,\dots,m}$.

The unknown parameter is the message sent by Alice, so $\mathcal{S}_{\nu} = \set{1,\dots,n}$, which she will encode into a quantum state.
In quantum state discrimination the set of decisions is $\mathcal{S}_{d}=\mathcal{S}_{\nu}\cup\set{0}$; the possible states that Alice can send along with an `inconclusive' decision, where Bob makes no decision as to which state Alice sent.
So Bob will choose the measurement which maximises 
\begin{equation}
    \epsilon^{\star} = \underset{\epsilon\in \mathcal{S}_{\epsilon}}{\argmax}\, \max_{\Gamma\in \mathcal{S}_{\Gamma}} \sum_{y=1}^{m} \sum_{\nu=1}^{n} p(N_{\nu},\epsilon_{y})\, U(\epsilon,y,\Gamma,\nu). \label{eq:qsdbed}
\end{equation}
Initially, we will only consider a single-objective optimisation: that is, $U(\epsilon,y,\Gamma,\nu) \in \mathbb{R}$.
Since the set $\mathcal{S}_{\epsilon}$ is infinite, it is possible that no POVM maximises the utility function. Usually the averaged utility function
\begin{equation}
U(\epsilon) = \max_{\Gamma\in \mathcal{S}_{\Gamma}} \sum_{y=1}^{m} \sum_{\nu=1}^{n} p(N_{\nu},\epsilon_{y})\, U(\epsilon,y,\Gamma,\nu)\label{eq:Uep}
\end{equation}
will be bounded, so it cannot take on arbitrarily large values. However, this does not guarantee that there exists a POVM $\epsilon^{\star}$. Nevertheless, the function $U(\epsilon)$ can be interpreted as quantifying how `good' measurement $\epsilon$ is at this particular task. In section~\ref{sec:conditions} we will provide conditions under which $U(\epsilon)$ can be considered a resource monotone in the resource theory of quantum measurements.

\subsubsection{Minimal Error State Discrimination}

In minimal error state discrimination, Bob values minimal error in his decision. That is, the utility function is simply
\begin{equation}
    U(\epsilon,y,\Gamma,\nu) = \delta_{\Gamma\left(y\right),\nu}\, ,
\end{equation}
where Bob's utility is zero if his decision does not correspond to the state sent by Alice, and one if it does.
Note that $\Gamma(y)\in\set{0,\dots,n}$ but $\nu\in \set{1,\dots,n}$, so  $\delta_{0,\nu}=0$ for all $\nu$.

Since this utility function is only dependent upon the specific decision Bob makes after his measurement, we can use the form \eqref{eq:oldbed}.
For this utility function, the optimal POVM $e^\star$ satisfies
\begin{equation}
    \epsilon^{\star} = \argmax_{\epsilon\in \mathcal{S}_{\epsilon}} \sum_{y=1}^{n} \max_{d\in S_{d}}\, p\left(N_{d}|\epsilon_{y}\right)\, p\left(\epsilon_{y}\right).
\end{equation}
By inspection, this expression is equivalent to \eqref{eq:bestme}: Bob wants the POVM which maximises the average probability that the measurement correctly identifies the state sent by Alice when using the best decision strategy: namely choosing the state which maximises the posterior probability. 
Note that since the inconclusive outcome counts as an error, Bob will never choose the inconclusive decision in the context of minimal error quantum state discrimination.

\subsubsection{Maximal Confidence State Discrimination}

In maximal confidence state discrimination, Bob values the confidence in each decision he makes.
It can be derived it from the utility function
\begin{equation}\label{eq:MaxLikelyUtility}
    U\left(\epsilon,y,\Gamma,\nu\right) = \frac{\delta_{\Gamma\left(y\right),\nu}}{p\left(D_{\Gamma\left(y\right)}\right)},
\end{equation}
where Bob's utility is inversely proportional to the probability of making that decision, to compensate for how rarely that decision is made. 
Note that $p(D_{\Gamma(y)})$ is calculated under the assumption that Bob has chosen measurement $\epsilon$,
\begin{align}
p(D_{\Gamma(y)}) &= \sum_{j=1}^{m} p(D_{\Gamma(y)}|\epsilon_{j})p(\epsilon_{j}) \nonumber\\
&= \sum_{j\in \Gamma^{-1}(y)} p(\epsilon_{j}).
\end{align}

Although this utility function might appear to only depend on a single decision, it actually depends on the entire decision strategy, since $p\left(D_{\Gamma\left(y\right)}\right)$ is only calculable with the knowledge of the entire decision strategy $\Gamma$, for we need to know how many measurement outcomes map to the same decision $\Gamma\left(y\right)$.

From \eqref{eq:bed}, the optimal POVM will satisfy
\begin{align}
    \epsilon^{\star} &= \argmax_{\epsilon\in \mathcal{S}_\epsilon} \,\max_{\Gamma\in \mathcal{S}_{\Gamma}} \sum_{y=1}^{m} \sum_{\nu=1}^{n} \frac{p\left(N_{\nu},\epsilon_{y}\right)\delta_{\Gamma\left(y\right),\nu}}{p\left(D_{\Gamma\left(y\right)}\right)} \nonumber \\
    &= \argmax_{\epsilon\in \mathcal{S}_\epsilon} \,\max_{\Gamma\in \mathcal{S}_{\Gamma}}  \sum_{\nu=1}^{n} \sum_{y\in \Gamma^{-1}(\nu)} \frac{p\left(N_{\nu},\epsilon_{y}\right)}{p\left(D_{\nu}\right)} \nonumber \\
    &= \argmax_{\epsilon\in \mathcal{S}_\epsilon} \,\max_{\Gamma\in \mathcal{S}_{\Gamma}}  \sum_{\nu=1}^{n} p\left(N_{\nu}|D_{\nu}\right),\label{eq:derivemc}
\end{align}
where in the last equality we used the fact that $\sum_{y\in \Gamma^{-1}(\nu)} p\left(N_{\nu}, \epsilon_{y}\right) = p\left(N_{\nu},D_{\nu}\right)$.
This final expression \eqref{eq:derivemc} is that which we have for a maximal confidence measurement \eqref{eq:confdec}.

As mentioned in section~\ref{sec:formalism}, in general there will be a family of measurements which maximise the total confidence. One can then prefer a member of this family by requiring an additional optimisation; such as the member which minimises the probability of the inconclusive outcome. We return to this idea in section~\ref{sec:multiobj}. 


\section{Conditions for Resource Monotones}\label{sec:conditions}

We have seen that the probability of success \eqref{eq:bestme} and the total confidence \eqref{eq:totalconf} serve as measures of the quality of a POVM at a particular discrimination task. In this section we formalise that notion, by showing that these measures are actually resource monotones in the resource theory of quantum measurements. 
This implies that if you stochastically scramble any POVM it will perform worse, according to these measures, at discriminating quantum states for any alphabet Alice uses.
To show this we first describe general conditions under which the utility function $U(\epsilon,y,\Gamma,\nu)$, will produce a resource monotone $U(\epsilon)$ \eqref{eq:Uep}.

Let measurement $\alpha$ be represented by $\left(A_{1},\dots,A_{m}\right)$ and measurement $\beta$ be represented by $(B_{1},\dots,B_{m^{\prime}})$, then the resource theory of quantum measurements orders $\alpha \geq \beta$, if we can write for all $i=1,\dots,m^{\prime}$,
\begin{equation}
B_{i} = \sum_{j=1}^{m} p(\beta_{i}|\alpha_{j})A_{j}.
\end{equation}
That is, $\beta$ is a stochastic mixing of $\alpha$. This transformation of $\alpha$ into $\beta$ is sometimes known as post-processing. 
A \emph{resource monotone} in the resource theory of quantum measurements \cite{Guff2019}, is a function
\begin{equation}
f:\mathcal{S}_{\epsilon} \rightarrow \mathbb{R},
\end{equation}
such that if $\alpha \geq \beta$, then $f(\alpha) \geq f(\beta)$. So a resource monotone assigns a real number to each measurement, and it respects the ordering of the resource theory.
It is well known that the probability of success is a resource monotone for the resource theory of quantum measurements \cite{Skrzypczyk2019,Guff2019}. Indeed, the total confidence \eqref{eq:confdec} is a monotone as well.
This suggests that the utility $U(\epsilon)$ \eqref{eq:Uep} may generally be resource monotones in the case when $U(\epsilon)\in \mathbb{R}$. However the utility function $U(\epsilon,y,\Gamma,\nu)$ is so general (we have hitherto placed no restrictions on it) that $U(\epsilon)$ will not typically be a resource monotone.
For example, if $\alpha \geq \beta$, then a possible utility function could satisfy $U(\beta,y,\Gamma,\nu) = 1$ for all $y$, $\Gamma$ and $\nu$, and $U(\alpha,y,\Gamma,\nu) = 0$ for all $y$, $\Gamma$ and $\nu$; this would clearly not lead to a resource monotone.
In this section we detail three conditions on the utility function $U(\epsilon,y,\Gamma,\nu)$ which guarantee that $U(\epsilon)$ will become resource monotone.

\begin{itemize}
\item[\textbf{C1}] Let the measurement $\alpha$ be represented by the POVM $(A_{1},\dots,A_{m})$ with decision strategy $\Gamma$. Suppose that $A_{i},A_{j}$ are both non-zero and proportional to each other, and that $\Gamma(i) = \Gamma(j)$. Then $U(\alpha,i,\Gamma,\nu) = U(\alpha,j,\Gamma,\nu)$ for all $\nu=1,\dots,n$.
\item[\textbf{C2}] Let the measurement $\alpha$ be represented by the POVM $(A_{1},\dots,A_{m})$ . Suppose that $A_{i},A_{j}$ are non-zero and proportional to each other. Then the optimal decision strategy,
\begin{equation}
\Gamma^{*}=\argmax_{\Gamma\in  \mathcal{S}_{\Gamma}} \sum_{y=1}^{m}\sum_{\nu=1}^{n} p(N_{\nu},\alpha_{y})U(\alpha,y,\Gamma,\nu),
\end{equation}
assigns $\Gamma^{*}(i)=\Gamma^{*}(j)$.
\item[\textbf{C3}] Let the measurement $\alpha$ be represented by the POVM $(A_{1},\dots,A_{i},\dots,A_{j},\dots,A_{m})$ with decision strategy $\Gamma$. Suppose that $\Gamma(i) = \Gamma(j)$, and consider the $m-1$ outcome measurement $\alpha^{\prime}$ represented by POVM 
\begin{equation}
(A_{1},\dots,A_{i}+A_{j},\dots,A_{m}).
\end{equation}
Let us label the measurement outcomes for this new POVM $\set{1,\dots,i+j,\dots,m}$ and also consider induced decision strategy $\Gamma^{\prime}$ that assigns to every outcome in $\alpha^{\prime}$ the same decision as was assigned by $\Gamma$ to measurement outcomes of $\alpha$, in particular we define $\Gamma^{'}(i+j)=\Gamma(i)=\Gamma(j)$. Then for all $\nu = 1, \dots, n$,
\begin{align}
U&(\alpha^{\prime},i+j,\Gamma^{\prime},\nu) \nonumber \\
&\leq  \frac{p(\alpha_{i}|N_{\nu})U(\alpha,i,\Gamma,\nu)+p(\alpha_{j}|N_{\nu})U(\alpha,j,\Gamma,\nu)}{p(\alpha_{i}+\alpha_{j}|N_{\nu})},
\end{align}
with equality if $A_{i}$ and $A_{j}$ are proportional and non-zero; and
\begin{equation}
U(\alpha^{\prime},k,\Gamma^{\prime},\nu) = U(\alpha,k,\Gamma,\nu),
\end{equation}
for all $k\ne i,j$.
\end{itemize}

These three conditions characterise a utility function that values the ability of a quantum measurement to extract information from the quantum state. The first two conditions describe POVMs with non-zero elements that are proportional.
From the perspective of information extraction, measurement outcomes from proportional elements provide the same amount of information; therefore their utilities should be the same, and the best decision strategy should make the same decision for both outcome.
The third condition relates to merging two measurement outcomes, a transformation that can only degrade the ability of a measurement to extract information. Here we require that the utility is at best the average of the utility of the two prior measurement outcomes.

We now prove that if $U(\epsilon,y,\Gamma,\nu)$ satisfies these three conditions then $U(\epsilon)$ is a resource monotone.
\begin{theorem}
Let the measurement $\alpha$ be represented by the POVM $(A_{1},\dots,A_{m})$, and $\beta$ be represented by $(B_{1},\dots,B_{m^{\prime}})$, related to $\alpha$ by stochastic mixing
\begin{equation}
B_{i} = \sum_{j=1}^{m} p(\beta_{i}|\alpha_{j})A_{j},\label{eq:mix}
\end{equation}
for all $i=1,\dots,m^{\prime}$, and let a utility function $U(\epsilon,y,\Gamma,\nu)$ satisfy conditions $\textbf{C1}$, $\textbf{C2}$ and $\textbf{C3}$, then
$U(\alpha)\geq U(\beta)$, where $U(\epsilon)$ is defined as in \eqref{eq:Uep}.
\end{theorem}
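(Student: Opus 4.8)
The plan is to reduce the general stochastic mixing \eqref{eq:mix} to a sequence of two elementary operations --- ``splitting'' a POVM element into proportional pieces and ``merging'' two elements that share a decision --- and to use conditions \textbf{C1}--\textbf{C3} to control the effect of each on $U(\epsilon)$. Concretely, I would introduce the refined measurement $\tilde\alpha$ whose elements are indexed by pairs $(j,i)$ with $\tilde A_{(j,i)} = p(\beta_i|\alpha_j)\,A_j$, discarding pairs for which $p(\beta_i|\alpha_j)=0$. Since $\sum_i p(\beta_i|\alpha_j)=1$ these sum to $A_j$, and hence over all $(j,i)$ to $\mathbbm 1$, so $\tilde\alpha$ is a valid POVM and each $\tilde A_{(j,i)}$ is proportional to $A_j$. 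Summing the group $\{(j,i):i\}$ recovers $A_j$, so $\alpha$ arises from $\tilde\alpha$ by merging these groups; summing $\{(j,i):j\}$ for fixed $i$ recovers $B_i=\sum_j p(\beta_i|\alpha_j)A_j$, so $\beta$ arises from $\tilde\alpha$ by merging the complementary groups. The proof then splits into establishing $U(\alpha)=U(\tilde\alpha)$ and $U(\tilde\alpha)\geq U(\beta)$.

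For the merge step $U(\tilde\alpha)\geq U(\beta)$, I would take a decision strategy $\Gamma_\beta^\star$ attaining $U(\beta)$ and pull it back to $\tilde\alpha$ via $\tilde\Gamma(j,i)=\Gamma_\beta^\star(i)$, so that every group collapsing to $B_i$ carries one common decision. Applying the \textbf{C3} inequality one pair at a time --- its hypotheses persist because merging elements that share a decision again yields an element with that decision --- together with $p(N_\nu,\tilde\alpha_{(j,i)})=p(N_\nu)\,p(\tilde\alpha_{(j,i)}|N_\nu)$ and the linearity of the joint probability in the POVM element, shows that each outcome of $\beta$ contributes no more to the objective than the sum of the contributions of its constituents in $\tilde\alpha$. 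Summing over outcomes gives $U(\beta)=\sum(\beta,\Gamma_\beta^\star)\leq\sum(\tilde\alpha,\tilde\Gamma)\leq U(\tilde\alpha)$.

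For the equality $U(\alpha)=U(\tilde\alpha)$, the same merge argument applied to $\tilde\alpha\to\alpha$ already yields $U(\tilde\alpha)\geq U(\alpha)$, so it remains to show $U(\alpha)\geq U(\tilde\alpha)$. Here I would begin from an optimal strategy $\tilde\Gamma^\star$ for $\tilde\alpha$; since each group $\{(j,i):i\}$ consists of non-zero proportional elements, \textbf{C2} forces $\tilde\Gamma^\star$ to be constant on every group and \textbf{C1} forces the constituent utilities within a group to coincide. Merging each group back to $A_j$ then falls under the equality clause of \textbf{C3} for proportional elements, so the objective is preserved \emph{exactly}, exhibiting a strategy on $\alpha$ that attains the value $U(\tilde\alpha)$ and hence $U(\alpha)\geq U(\tilde\alpha)$. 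Chaining $U(\alpha)=U(\tilde\alpha)\geq U(\beta)$ completes the argument.

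I expect the main obstacle to be bookkeeping the direction of each inequality against the maximisation over decision strategies in \eqref{eq:Uep}: a merge only ever gives a one-sided bound, so the equality $U(\alpha)=U(\tilde\alpha)$ genuinely requires \textbf{C1} and \textbf{C2} to pin down the optimal strategy on the refined measurement and the equality clause of \textbf{C3} to transfer its value back without loss --- merely invoking the merge bound twice would give inequalities pointing the wrong way. A secondary technical point is justifying the iterated use of \textbf{C3}, which is stated for a single pair, to collapse an entire group, and checking that the induced strategies and proportionality hypotheses survive each intermediate merge.
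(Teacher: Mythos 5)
Your proposal is correct and follows essentially the same route as the paper's proof: your refined measurement $\tilde\alpha$ with elements $p(\beta_i|\alpha_j)A_j$ is exactly the paper's intermediate POVM $\beta'$, the merge step $U(\beta)\leq U(\tilde\alpha)$ via iterated \textbf{C3} with the pulled-back optimal strategy is the paper's first inequality, and your use of \textbf{C2}, \textbf{C1} and the equality clause of \textbf{C3} to transfer the optimal value of $\tilde\alpha$ to a strategy on $\alpha$ is precisely how the paper obtains $U(\beta')\leq U(\alpha)$. The only cosmetic difference is that you prove the full equality $U(\alpha)=U(\tilde\alpha)$ where the paper needs (and proves) only the one-sided bound $U(\beta')\leq U(\alpha)$, and you correctly flag the same technical points (iterating the pairwise \textbf{C3}, handling zero elements) that the paper also addresses.
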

\begin{proof}
Let $U(\epsilon,y,\Gamma,\nu)$ satisfy conditions \textbf{C1}, \textbf{C2} and \textbf{C3}.
Let $\Gamma^{*}_{\beta}$ be the optimal decision strategy for measurement $\beta$, that is
\begin{equation}
\Gamma^{*}_{\beta}=\argmax_{\Gamma\in  \mathcal{S}_{\Gamma}} \sum_{y=1}^{m^{\prime}}\sum_{\nu=1}^{n} p(N_{\nu},\beta_{y})U(\beta,y,\Gamma,\nu).
\end{equation}
Now let us create the $(m\times m^{\prime})$-outcome measurement $\beta^{\prime}$, whose POVM is created by using every term in all the sums represented by \eqref{eq:mix},
\begin{equation}
\left( p(\beta_{1}|\alpha_{1})A_{1},\dots, p(\beta_{m^{\prime}}|\alpha_{m})A_{m} \right).
\end{equation}
Let us label the measurement outcomes of this new POVM by the pair $(i,j)$, where $(i,j)$ corresponds to the POVM $p(\beta_{i}|\alpha_{j})A_{j}$.
We associate to this POVM the induced decision strategy $\Gamma^{*\prime}_{\beta}$ which makes the same decision as was made for the measurement $\beta$; that is $\Gamma_{\beta}^{*\prime}(i,j) = \Gamma_{\beta}^{*}(i)$.

Now it is clear that $\beta$ (with decision strategy $\Gamma_{\beta}^{*}$) can be derived from $\beta^{\prime}$ (with decision strategy $\Gamma_{\beta}^{*\prime}$) using a finite number of merging operations described in \textbf{C3}: $\beta$ is recovered by combining all outcomes with first index $i$. This can be done one at a time, each time applying the inequality described in \textbf{C3}. Finally, we have
\begin{align}
U(\beta) &= \sum_{y=1}^{m^{\prime}} \sum_{\nu=1}^{n} p(N_{\nu},\beta_{y})\, U(\beta,y,\Gamma^{*}_{\beta},\nu) \nonumber \\
&\leq \sum_{i=1}^{m^{\prime}}\sum_{j=1}^{m} \sum_{\nu=1}^{n} p(N_{\nu},\beta^{\prime}_{(i,j)})U(\beta^{\prime},(i,j),\Gamma^{*\prime}_{\beta},\nu).
\end{align}
Since the strategy $\Gamma_{\beta}^{*\prime}$ is possibly suboptimal, we know that
\begin{align}
U(\beta) &\leq \max_{\Gamma\in \mathcal{S}_{\Gamma}}\sum_{i=1}^{m^{\prime}}\sum_{j=1}^{m} \sum_{\nu=1}^{n} p(N_{\nu},\beta^{\prime}_{(i,j)})U(\beta^{\prime},(i,j),\Gamma,\nu), \nonumber \\
&= U(\beta^{\prime}).\label{eq:ub}
\end{align}
Now $\beta^{\prime}$ has many POVM elements which are proportional to each other; specifically the element corresponding to outcome $(i,j)$ is proportional to the element corresponding to $(i^{\prime},j)$ for all $i,i^{\prime},j$.
By condition \textbf{C2}, the optimal decision strategy $\Gamma^{*}_{\beta^{\prime}}$ will assign the same state to proportional, non-zero elements,
\begin{equation}
\Gamma^{*}_{\beta^{\prime}}(i,j) = \Gamma^{*}_{\beta^{\prime}}(i^{
\prime},j) \quad \text{for all } i,i^{\prime},j. \label{eq:bestdecstrat}
\end{equation}
It is irrelevant how we assign decisions or even utilities for measurement outcomes represented by POVM elements which are zero. This is because they will occur with zero probability, and will contribute nothing towards $U(\epsilon)$. Hence we can assume all elements are non-zero without loss of generality.

By condition \textbf{C1} the utility function is constant over non-zero proportional POVM elements which are assigned the same decision; that is we can write
\begin{equation}
U(\beta^{\prime},(i,j),\Gamma^{*}_{\beta^{\prime}},\nu) = U(\beta^{\prime},j,\Gamma^{*}_{\beta^{\prime}},\nu),
\end{equation}
for all $i=1,\dots,m^{\prime}$. If we use this notation we have
\begin{align}
U(\beta^{\prime}) &= \sum_{i=1}^{m^{\prime}} \sum_{j=1}^{m} \, \sum_{\nu=1}^{n} p(N_{\nu},\beta^{\prime}_{(i,j)})U(\beta^{\prime},(i,j),\Gamma^{*}_{\beta^{\prime}},\nu) \nonumber \\
&= \sum_{i=1}^{m^{\prime}} \sum_{j=1}^{m} \sum_{\nu=1}^{n} p(\beta_{i}|\alpha_{j}) p(N_{\nu},\alpha_{j})U(\beta^{\prime},j,\Gamma^{*}_{\beta^{\prime}},\nu) \nonumber \\
&= \sum_{j=1}^{m} \sum_{\nu=1}^{n} p(N_{\nu},\alpha_{j})U(\beta^{\prime},j,\Gamma^{*}_{\beta^{\prime}},\nu).\label{eq:bprime}
\end{align}
We now need to relate the last line to $U(\alpha)$.
The POVM $\alpha$ can also be reconstructed from $\beta^{\prime}$ by combining all outcomes $(i,j)$ with the same second index $j$. Since these are POVM elements are proportional, the equality condition in \textbf{C3} along with \textbf{C1} implies
\begin{equation}
U(\beta^{\prime},j,\Gamma^{*}_{\beta^{\prime}},\nu) = U(\alpha,j,\Gamma^{\prime}_{\alpha},\nu),\label{eq:btoa}
\end{equation}
where $\Gamma^{\prime}_{\alpha}$ is the decision strategy induced from $\Gamma^{*}_{\beta^{\prime}}$
\begin{equation}
\Gamma^{\prime}_{\alpha}(j) = \Gamma^{*}_{\beta^{\prime}}(i,j) \quad \text{for any } i=1,\dots,m^{\prime}.
\end{equation}
From \eqref{eq:btoa} and \eqref{eq:bprime}, we have
\begin{align}
U(\beta^{\prime}) &= \sum_{j=1}^{m} \sum_{\nu=1}^{n} p(N_{\nu},\alpha_{j})U(\alpha,j,\Gamma^{\prime}_{\alpha},\nu) \nonumber \\
&\leq \max_{\Gamma\in \mathcal{S}_{\Gamma}} \sum_{j=1}^{m} \sum_{\nu=1}^{n} p(N_{\nu},\alpha_{j})U(\alpha,j,\Gamma^{\prime},\nu) \nonumber \\
&= U(\alpha).
\end{align}
This with \eqref{eq:ub} implies $U(\beta) \leq U(\alpha)$.
\end{proof}

The probability of success \eqref{eq:decsucc} and the total confidence \eqref{eq:confdec} are resource monotones.
This can be checked directly, however here we will show that the utility functions which generate these two functions satisfy all three conditions stated above. 

For minimal error state discrimination, the utility function was simply 
\begin{equation}
U(\epsilon,y,\Gamma,\nu) = \delta_{\Gamma(y),\nu}. 
\end{equation}
Now if $\Gamma(i)=\Gamma(j)$ then we have $U(\epsilon,i,\Gamma,\nu) = U(\epsilon,j,\Gamma,\nu)$,  hence we immediately have that this utility function satisfies conditions \textbf{C1} and \textbf{C3}. 
To see that \textbf{C2} is satisfied, recall that we showed that the best decision strategy was always to choose the state with the highest posterior probability after the measurement. But if two POVM elements are proportional, then all posterior probabilities are identical. This can be seen in \eqref{eq:maxconf}, where multiplying $A$ by a constant doesn't change the posterior distribution. 
Hence for proportional elements $A_{i}$ and $A_{j}$, the best decision strategy will always be to assign $\Gamma(i)=\Gamma(j)$.

Maximal confidence measurements had a slightly more complicated utility function
\begin{equation}
U(\epsilon,y,\Gamma,\nu) = \frac{\delta_{\Gamma(y),\nu}}{p(D_{\Gamma(y)})}. 
\end{equation}
Once again, we can immediately see that this utility function meets conditions \textbf{C1} and \textbf{C3}, since if $\Gamma(i) = \Gamma(j)$, then $U(\epsilon,i,\Gamma,\nu) = U(\epsilon,j,\Gamma,\nu)$. To establish \textbf{C2}, recall that the best decisions strategy involved mapping only single measurement to the possible quantum states, and mapping the rest to the inconclusive outcome.
There we saw that mapping two measurement outcomes to the same quantum state would cause a decrease in the total confidence \eqref{eq:mcbestdec}, \emph{unless} the measurement outcomes yielded the same posterior probability, which is certainly true for measurement outcomes corresponding to proportional POVM elements.

\section{Other Utility Functions}
\label{sec:other}

The framework of Bayesian experimental design allows us to consider different utility functions, and thus value quantities other than minimal error or maximal confidence. For example, if we have a utility function of $U\left(\epsilon,y,\Gamma,\nu\right) = \log\left(p\left(N_{\nu}|\epsilon_{y}\right)\right)$, then the best POVM $\epsilon^\star$ is
\begin{align}
\epsilon^{\star} &= \argmax_{\epsilon\in\mathcal{S}_{\epsilon}} \,\max_{\Gamma\in\mathcal{S}_{\Gamma}} \sum_{y=1}^{m} \sum_{\nu=1}^{n} p\left(N_{\nu},\epsilon_{y}\right)\log\left(p\left(N_{\nu}|\epsilon_{y}\right)\right) \nonumber \\
                 &= \argmax_{\epsilon\in\mathcal{S}_{\epsilon}}  -H\left(N|\epsilon\right)
                   = \argmin_{\epsilon\in\mathcal{S}_{\epsilon}}  H\left(N|\epsilon\right). \label{eq:condinfo}
\end{align}
So the best POVM is the one that minimises the conditional entropy of the variable $N$, given $\epsilon$. The utility function here doesn't depend at all on Bob's decision. Rather, it only values the ability of Bob's measurement to extract the most amount of information on average.

If we add a constant to the utility function we do not change which measurement is optimal. Let us consider 
\begin{align}
U\left(\epsilon,y,\Gamma,\nu\right) &= \log\left(p\left(N_{\nu}|\epsilon_{y}\right)\right) - \sum_{\nu=1}^{n} p\left(N_{\nu}\right)\log\left(p\left(N_{\nu}\right)\right) \nonumber \\
&= \log\left(p\left(N_{\nu}|\epsilon_{y}\right)\right) + H\left(N\right),
\end{align}
where $H\left(N\right)$ is the Shannon entropy of the probability distribution of variable $N$. Now the best POVM is
\begin{align}
\epsilon^{\star} &= \argmax_{\epsilon\in\mathcal{S}_{\epsilon}} \max_{\Gamma\in\mathcal{S}_{\Gamma}} \sum_{y=1}^{m} \sum_{\nu=1}^{n} p\left(N_{\nu},\epsilon_{y}\right)\log\left(p\left(N_{\nu}|\epsilon_{y}\right)\right) + H\left(N\right) \nonumber \\
&= \argmax_{\epsilon\in\mathcal{S}_{\epsilon}} I\left(N:\epsilon\right), \label{eq:mutualinfo}
\end{align}
where $I\left(N:\epsilon\right)$ is the mutual information between variables $N$ and $\epsilon$. This is a common measure of the information gained from a quantum measurement. 
Since $H(N)$ a constant term of Bob's choice of POVM, the same POVM which minimises the conditional entropy, will maximise the mutual information.

We can use the conditions described in the previous section to show that $-H(N|\epsilon)$ and $I(N:\epsilon)$ are resource monotones. 
Since the utility function is a function of the posterior distribution $p(N_{\nu}|\epsilon_{y})$, it is constant over POVM elements which are proportional to each other, meeting condition \textbf{C1}. Condition \textbf{C2} is met trivially, since this utility function is independent of the decision strategy.
Condition \textbf{C3} is less trivial. From Bayes' rule, we have that
\begin{equation}
p(N_{\nu}|\alpha_{i}+\alpha_{j}) = \frac{p(\alpha_{i})p(N_{\nu}|\alpha_{i})+p(\alpha_{j})p(N_{\nu}|\alpha_{j})}{p(\alpha_{i}+\alpha_{j})}.\label{eq:confavg}
\end{equation}
Since the function $x \log(x)$ is convex we have the inequality
\begin{align}
p&(N_{\nu}|\alpha_{i}+\alpha_{j})\log(p(N_{\nu}|\alpha_{i}+\alpha_{j})) \nonumber \\
&\leq \frac{p(\alpha_{i})}{p(\alpha_{i}+\alpha_{j})}p(N_{\nu}|\alpha_{i})\log(p(N_{\nu}|\alpha_{i})) \nonumber \\
& \quad+ \frac{p(\alpha_{j})}{p(\alpha_{i}+\alpha_{j})}p(N_{\nu}|\alpha_{j})\log(p(N_{\nu}|\alpha_{j})).
\end{align}
Again applying Bayes' rule to all three coefficients, we can re-write this inequality as
\begin{align}
\log&(p(N_{\nu}|\alpha_{i}+\alpha_{j})) \nonumber \\
& \leq \frac{p(\alpha_{i}|N_{\nu})\log(p(N_{\nu}|\alpha_{i}))+p(\alpha_{j}|N_{\nu})\log(p(N_{\nu}|\alpha_{j}))}{p(\alpha_{i}+\alpha_{j}|N_{\nu})},
\end{align}
which is exactly condition \textbf{C3}.
From \eqref{eq:confavg} we see that if $A_{i}$ and $A_{j}$ are proportional, then
\begin{equation}
p(N_{\nu}|\alpha_{i}+\alpha_{j}) = p(N_{\nu}|\alpha_{i}) = p(N_{\nu}|\alpha_{j}),
\end{equation}
which is the equality condition.\\

The quantities $-H(N|\epsilon)$ and $I(N:\epsilon)$ are resource monotones for the stochastic mixing of POVMs. 
This means that if $\alpha\geq \beta$ under stochastic mixing, then for any resource monotone $f$, $f(\alpha)\geq f(\beta)$.
This implies that there always exists an optimal POVM for these functions that is constructed from only rank-1 POVM elements, since any POVM can be generated from a stochastic mixture of a rank-1 POVM.
To see this, consider the POVM $\left(A_{1},\dots,A_{m}\right)$ and consider the spectral decomposition
\begin{equation}
A_{i} = \sum_{j=1}^{d} \lambda_{j}^{i}\ketbra{\lambda_{j}^{i}}{\lambda_{j}^{i}},
\end{equation}
for each $i=1,\dots,m$. The collection of all rank-1 elements
\begin{equation}
\left(\lambda_{1}\ketbra{\lambda_{1}^{1}}{\lambda_{1}^{1}},\dots,\lambda_{d}^{m}\ketbra{\lambda_{d}^{m}}{\lambda_{d}^{m}}\right).
\end{equation}
is itself a POVM. The original POVM $\left(A_{1},\dots,A_{m}\right)$ can easily be seen as a stochastic mixture of this POVM.

This argument also holds for the probability of success \eqref{eq:decsucc} and the total confidence \eqref{eq:confdec}. Since their utility functions also satisfy \textbf{C1}-\textbf{C3}, this also implies that the optimal POVM in those cases would consist of rank-1 POVM elements. 
However, as mentioned earlier, one need not restrict to POVMs with only rank-1 elements in the context of minimal error and maximal confidence state discrimination.
This is because, for a given decision strategy $\Gamma$, we can create a new POVM by combining the elements of a rank-1 POVM corresponding to each decision $\nu$. This new POVM will have the same probability of deciding on $\nu$.
This does not hold for the mutual or conditional entropy since the entropy is not a function of the probability of the decision, but the probability of the measurement outcome.

Indeed, one might want to consider the conditional entropy and mutual information based on Bob's decision rather than his measurement outcome. That is 
\begin{equation}
U\left(\epsilon,y,\Gamma,\nu\right) = \delta_{\Gamma(y),\nu}\log\left(p\left(N_{\nu}|D_{\Gamma(y)}\right)\right).
\end{equation}
This leads to the following optimisation,
\begin{align}
&\argmax_{\epsilon\in\mathcal{S}_{\epsilon}}\; \max_{\Gamma\in\mathcal{S}_{\Gamma}}\; \sum_{\nu=1}^{n} \sum_{y=1}^{m} p\left(N_{\nu},\epsilon_{y}\right) \delta_{\Gamma(y),\nu}\log\left(p\left(N_{\nu}|D_{\Gamma(y)}\right)\right). \nonumber \\
&= \argmax_{\epsilon\in\mathcal{S}_{\epsilon}}\; \max_{\Gamma\in\mathcal{S}_{\Gamma}}\; \sum_{y=1}^{m} p\left(N_{\Gamma(y)},\epsilon_{y}\right) \log\left(p\left(N_{\Gamma(y)}|D_{\Gamma(y)}\right)\right) \nonumber \\
&= \argmax_{\epsilon\in\mathcal{S}_{\epsilon}}\; \max_{\Gamma\in\mathcal{S}_{\Gamma}}\; \sum_{d=1}^{n} p\left(N_{d},D_{d}\right) \log\left(p\left(N_{d}|D_{d}\right)\right) \nonumber \\
&= \argmax_{\epsilon\in\mathcal{S}_{\epsilon}}\; \max_{\Gamma\in\mathcal{S}_{\Gamma}}\; -H\left(N|D\right)_{\epsilon}.
\end{align}
We can likewise add a constant $H\left(N\right)$ to convert this to the mutual information $I\left(N:D\right)_{\epsilon}$. Thus in this case we want to optimise the correlations between the decision Bob makes and the state Alice sends.

\section{Multi-Objective Utility Functions}
\label{sec:multiobj}

In this section we look at multi-objective utility functions, which are useful when one would like to consider multiple optimisations in sequence. 
This would arise when a single utility function does not choose a single optimal measurement. 
This can arise in the context of both maximal confidence and minimal error state discrimination.
Here we allow the utility function to be vector valued, equipped with the dictionary order (described in the early paragraphs of section~\ref{sec:BED}).

\subsection{Maximal Confidence State Discrimination}

Recall that in the construction of the maximal confidence measurement \eqref{eq:maxconf}, rescaling the POVM elements by a constant does not change the confidence in that measurement outcome, so there is a continuous family of measurements that all have the maximal possible confidence. However rescaling POVM elements can change the probability of the inconclusive outcome. 
Since the inconclusive outcome is undesirable, one therefore may wish to choose the maximal confidence measurement which minimises the probability of the inconclusive outcome. 
Let us show a simple example of this.

\begin{example}\label{ex:unambig}
Let Alice use two possible qutrit states $\rho_{1}$ and $\rho_{2}$ to encode her messages, and she will send the messages with equal probability $q_{1}=q_{2}=\frac{1}{2}$. Using the notation that $\ket{\pm} = \frac{1}{2}\left(\ket{1}\pm\ket{2}\right)$, let
\begin{equation}
\begin{matrix}
\rho_{1} = \frac{1}{2}\left(\ketbra{0}{0}+\ketbra{1}{1}\right), \\ 
\\
\rho_{2} = \ketbra{+}{+}.
\end{matrix}
\end{equation}
These two states can be distinguished unambiguously, for example, if we choose $A_{1} = a_{1}\ketbra{-}{-}$ and $A_{2} = a_{2}\ketbra{2}{2}$ then $p\left(\rho_{1}|A_{1}\right) = p\left(\rho_{2}|A_{2}\right) = 1$. 
There is also the inconclusive outcome represented by $A_{0} = \mathbbm{1} - A_{1} - A_{2}$, which is only positive semi-definite when $0\leq a_{1}\leq 1$ and $0\leq a_{2}\leq 2-2/(2-a_{1})$.
Any choice of $a_{1},a_{2}$ within these parameters is a valid POVM which will unambiguously distinguish between $\rho_{1}$ and $\rho_{2}$, but clearly some choices are better than others; if both $a_{1}$ and $a_{2}$ are vanishingly small then the inconclusive outcome will almost always register. 
Thus parameters are ideally chosen to minimise the probability of the inconclusive outcome $p\left(A_{0}\right)$.
In this case that is achieved by making $a_{1}$ as small as possible and $a_{2}$ as large as possible, for which $p\left(A_{0}\right)$ approaches $\frac{3}{4}$.

In fact, one can do even better since $A_{1}$ and $A_{2}$ did not fully characterise the set of maximal confidence measurements. One can choose the projective measurement $B_{1} = \ketbra{0}{0}$, $B_{2} = \ketbra{2}{2}$; which implies that $B_{0} = \ketbra{1}{1}$. Here the probability of the inconclusive outcome $p\left(B_{0}\right)=\frac{1}{2}$, and cannot be improved by rescaling $B_{1}$ and $B_{2}$. This demonstrates that the family of maximal confidence measurements won't always be related to each other by rescaling.
\end{example}

This secondary optimisation can be easily incorporated into Bayesian experimental design by considering utility functions whose image lies in $\mathbb{R}^{2}$, ordered under the dictionary order.
For example, Bob may want the maximal confidence measurement which, within this family, maximises the probability of success \eqref{eq:decsucc}. This happens to be equivalent to choosing the maximal confidence with the smallest probability of yielding the inconclusive outcome.
For any maximally confident POVM, the probability of success is
\begin{equation}
\sum_{\nu=1}^{n} p\left(N_{\nu}|D_{\nu}\right)p\left(D_{\nu}\right) = \sum_{\nu=1}^{n} c_{\nu} \, p\left(D_{\nu}\right),
\end{equation}
where $c_{\nu}=p\left(\rho_{\nu}|D_{\nu}\right)$ are constants within the set of maximally confident POVMs. Define $c_{0}$ as the positive number for which
\begin{equation}
 \sum_{\nu=1}^{n} c_{\nu} \, p\left(D_{\nu}\right) + c_{0}\, p\left(D_{0}\right) = 1.
\end{equation}
Thus we see that maximising the probability of success is equivalent to minimising the probability of the inconclusive outcome
\begin{equation}
 \sum_{j=1}^{n} c_{j} \, p\left(D_{j}\right) = 1- c_{0}\, p\left(D_{0}\right).
\end{equation}

If Bob desires a maximal confidence measurement which has the smallest probability of acquiring the inconclusive outcome, then Bob's utility function is
\begin{equation}
U(\epsilon,y,\Gamma,\nu) = \left(\frac{\delta_{\Gamma\left(y\right),\nu}}{p\left(D_{\Gamma\left(y\right)}\right)},1-\delta_{\Gamma\left(y\right),0}\right),
\end{equation}
which encodes that Bob first wants to maximise the confidence, and only then search for the POVM with the lowest inconclusive probability.

The optimal POVM now satisfies
\begin{align}
  \epsilon^{\star} &= \argmax_{\epsilon\in \mathcal{S}_\epsilon}\, \max_{\Gamma\in\mathcal{S}_{\Gamma}}\; (U_{1}(\epsilon,\Gamma),U_{2}(\epsilon,\Gamma)) 
\end{align}
where
\begin{subequations}
\begin{align}
U_{1}(\epsilon,\Gamma) &= \sum_{y=1}^{m} \sum_{\nu=1}^{n}
  \frac{p\left(N_{\nu},\epsilon_{y}\right)\delta_{\Gamma\left(y\right),\nu}}{p\left(D_{\Gamma\left(y\right)}\right)} \nonumber \\
  &= \sum_{\nu=1}^{n} p\left(N_{\nu}|D_{\nu}\right).\\
U_{2}(\epsilon,\Gamma) &=  \sum_{y=1}^{m} \sum_{\nu=1}^{n}  p\left(N_{\nu}, \epsilon_{y}\right)\left(1\!-\!\delta_{\Gamma\left(y\right),0}\right)\nonumber \\
&= 1\!-\!p\left(D_{0}\right).
\end{align}
\end{subequations}
Maximising under the dictionary order maximises each expression in order,
subject to all previous expressions being maximised. Therefore, in this case the
best measurement is the maximal confidence measurement that minimises the
probability of the null decision.

\subsection{Minimal Error State Discrimination}

There are examples for which it would be useful to perform a secondary optimisation on minimal error measurements.

In \cite{hunter_measurement_2003}, the author derives the following result: if there exists a $\rho_{m}$ in
Alice's alphabet which satisfies
\begin{equation}
    q_{m}\rho_{m} - q_{k}\rho_{k}\geq 0,\quad \text{for all } k, \label{eq:alwaysm}
\end{equation}
then the optimal measurement is represented by the POVM, whose $j^{\text{th}}$
element is $\delta_{j,m}\mathbbm{1}$. In other words, the optimal measurement
is to perform no measurement and simply guess the state $\rho_{m}$.
The author shows that this set of inequalities can be satisfied with non-trivial alphabets.

This appears counter-intuitive; measurements which provide some information should always be better (as we need not act on it). In fact, if \eqref{eq:alwaysm} is satisfied, all measurements have the same (and hence optimal) probability of success after optimising over the decision strategy. From \eqref{eq:alwaysm} we have, for all $k$, and for any positive semi-definite operator $A_{\nu}\leq \mathbbm{1}$,
\begin{align}
\tr\left(\left(q_{m}\rho_{m} - q_{k}\rho_{k}\right)A_{\nu}\right) &\geq 0, \nonumber \\
\tr\left(q_{m}\rho_{m}A_{\nu}\right) &\geq \tr\left(q_{k}\rho_{k}A_{\nu}\right), \nonumber \\
P\left(\rho_{m}|A_{\nu}\right)P\left(A_{\nu}\right) &\geq P\left(\rho_{k}|A_{\nu}\right)P\left(A_{\nu}\right), \nonumber \\
P\left(\rho_{m}|A_{\nu}\right) &\geq P\left(\rho_{k}|A_{\nu}\right).
\end{align}
That is, the state $\rho_{m}$ will always be the state with the largest posterior probability, no matter the measurement or the measurement result.
Therefore, regardless of what measurement Bob does, his best decision is to always choose that Alice sent state $\rho_{m}$. 
For these kinds of alphabets and prior probabilities any measurement device is as good as any other in the context of minimal error state discrimination. This includes the trivial measurement, represented with a single non-zero POVM element, and physically corresponds to Bob doing no measurement and simply deciding that Alice sent $\rho_{m}$ every time.

But each non-trivial measurement still provides information about Alice's state though and further measurements may lead to Bob changing his decision. 
To see this, we can generalise the condition~\eqref{eq:alwaysm} to the case of repeated messages.

\begin{proposition}
  Suppose Alice sends $d$ copies of the same state. After Bob performs a local
  measurement on each individual state sent, then $\rho_{m}$ will always have the largest posterior probability if and only if
    \begin{equation}
    \sqrt[d]{q_{m}}\rho_{m} - \sqrt[d]{q_{k}}\rho_{k}\geq 0 \quad \text{for all } k. \label{eq:dcopies}
  \end{equation}
\end{proposition}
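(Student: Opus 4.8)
The plan is to first turn the posterior condition into a purely operator-theoretic statement. When Alice sends $d$ copies of the same message $\nu$ the global state is $\rho_\nu^{\otimes d}$, and a local measurement on each copy is a product POVM whose elements are $A_{y_1}\otimes\cdots\otimes A_{y_d}$ with each $A_{y_i}$ a single-copy element satisfying $0\leq A_{y_i}\leq\mathbbm{1}$. Since $\tr\big((A_{y_1}\otimes\cdots\otimes A_{y_d})\rho_\nu^{\otimes d}\big)=\prod_{i=1}^{d}\tr(A_{y_i}\rho_\nu)$, Bayes' rule gives a posterior $p(N_\nu\,|\,\vec y)\propto q_\nu\prod_{i=1}^{d}\tr(A_{y_i}\rho_\nu)$. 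Hence ``$\rho_m$ always has the largest posterior probability'' is equivalent to the assertion that, for every tuple $A_1,\dots,A_d$ of single-copy elements and every $k$,
\[
q_m\prod_{i=1}^{d}\tr(A_i\rho_m)\;\geq\;q_k\prod_{i=1}^{d}\tr(A_i\rho_k).
\]
This reformulation is the crux; it is the direct $d$-copy analogue of the $d=1$ chain of inequalities already given below \eqref{eq:alwaysm}.

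For the direction assuming \eqref{eq:dcopies}, I would test positivity of $\sqrt[d]{q_m}\rho_m-\sqrt[d]{q_k}\rho_k$ against each $A_i$ separately, obtaining $\tr\big(A_i(\sqrt[d]{q_m}\rho_m-\sqrt[d]{q_k}\rho_k)\big)\geq 0$, i.e.\ $\sqrt[d]{q_m}\,\tr(A_i\rho_m)\geq\sqrt[d]{q_k}\,\tr(A_i\rho_k)$ for every $i$. Both sides are non-negative because $q_m,q_k>0$ and the traces are traces of products of positive operators, so I may multiply these $d$ inequalities together; the $d$-th roots of $q_m$ and $q_k$ combine into $q_m$ and $q_k$, which reproduces exactly the product inequality above.

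For the converse, I would specialise the product inequality to the single outcome in which every copy returns the same result, i.e.\ set $A_1=\cdots=A_d=A$ for an arbitrary $A$ with $0\leq A\leq\mathbbm{1}$, giving $q_m(\tr A\rho_m)^d\geq q_k(\tr A\rho_k)^d$. Because $\tr A\rho_m,\tr A\rho_k\geq 0$ and $t\mapsto t^d$ is increasing on $[0,\infty)$, taking $d$-th roots is legitimate and yields $\tr\big(A(\sqrt[d]{q_m}\rho_m-\sqrt[d]{q_k}\rho_k)\big)\geq 0$. Since this holds for every such $A$, and in particular for every rank-1 projector $A=\ketbra{\psi}{\psi}$, the operator $\sqrt[d]{q_m}\rho_m-\sqrt[d]{q_k}\rho_k$ has non-negative expectation in every pure state and is therefore positive semi-definite, which is \eqref{eq:dcopies}.

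I expect the principal obstacle to be bookkeeping rather than conceptual: correctly justifying the passage between the product inequality and the operator inequality in both directions. Concretely, (a) multiplying the $d$ termwise inequalities is valid only because every factor is non-negative, and (b) the $d$-th root step is monotone precisely because the single-copy traces are non-negative. One should also check at the outset that taking $A_i=A$ is admissible --- it corresponds to the genuine product-measurement outcome in which all copies yield the same result, and any $A$ with $0\leq A\leq\mathbbm{1}$ completes to a valid two-outcome POVM --- so that the converse argument loses no generality.
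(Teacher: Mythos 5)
Your proof is correct and takes essentially the same route as the paper's: the forward direction multiplies the $d$ single-copy trace inequalities obtained by testing $\sqrt[d]{q_{m}}\rho_{m}-\sqrt[d]{q_{k}}\rho_{k}\geq 0$ against each local element, and the converse uses a record in which every copy yields the same outcome $A$, relating $d$-th powers and roots to deduce operator positivity. The only cosmetic difference is that the paper phrases the converse contrapositively (a violating $B$ yields a record favouring $\rho_{k}$) while you argue directly via rank-1 projectors; the mathematical content is identical.
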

\begin{proof}
    Suppose \eqref{eq:dcopies} holds; then for any positive semi-definite operator $A_{j}$,
    \begin{equation}
        \sqrt[d]{q_{m}}\Tr\left(\rho_{m}A_{j}\right)\geq \sqrt[d]{q_{k}}\Tr\left(\rho_{k}A_{j}\right),
    \end{equation}
    for all $k$. This inequality holds if we take the product of $d$ such terms. Let $A_{j}^{\left(i\right)}$ denote that result $A_{j}$ was measured on the $i^{\text{th}}$ system.
    \begin{align}
        \prod_{j=1}^{d}\sqrt[d]{q_{m}}\Tr\left(\rho_{m}A_{j}^{(j)}\right)&\geq \prod_{j=1}^{d}\sqrt[d]{q_{k}}\Tr\left(\rho_{k}A_{j}^{(j)}\right), \nonumber \\
        \Rightarrow 	q_{m}\prod_{j=1}^{d}\Tr\left(\rho_{m}A_{j}^{(j)}\right)&\geq q_{k}\prod_{j=1}^{d}\Tr\left(\rho_{k}A_{j}^{(j)}\right),
    \end{align}
    so $P\left(\rho_{m}|A_{1}^{\left(1\right)}\dots A_{d}^{\left(d\right)}\right) \geq P\left(\rho_{k}|A_{1}^{\left(1\right)}\dots A_{d}^{\left(d\right)}\right)$,
    for all $k$. If \eqref{eq:dcopies} doesn't hold, then there exists a positive semi-definite operator $B$ and state $\rho_{k}$ such that
    \begin{equation}
        \sqrt[d]{q_{m}}\Tr\left(\rho_{m}B\right) < \sqrt[d]{q_{k}}\Tr\left(\rho_{k}B\right).
    \end{equation}
    Taking the $d^{\text{th}}$ power of both sides,
    \begin{equation}
        q_{m}\Tr\left(\rho_{m}B\right)^{d} < q_{k}\Tr\left(\rho_{k}B\right)^{d},
    \end{equation}
    and so $P\left(\rho_{m}|B^{\left(1\right)}\dots
    B^{\left(d\right)}\right)<P\left(\rho_{k}|B^{\left(1\right)}\dots
    B^{\left(d\right)}\right)$; there exists a measurement record which would cause
    Bob to decide that Alice sent $\rho_{k}$ where $k\neq m$.
\end{proof}

Clearly, \eqref{eq:alwaysm} is the case where $d=1$.

\begin{figure}[t]
  \includegraphics[width=0.4\textwidth]{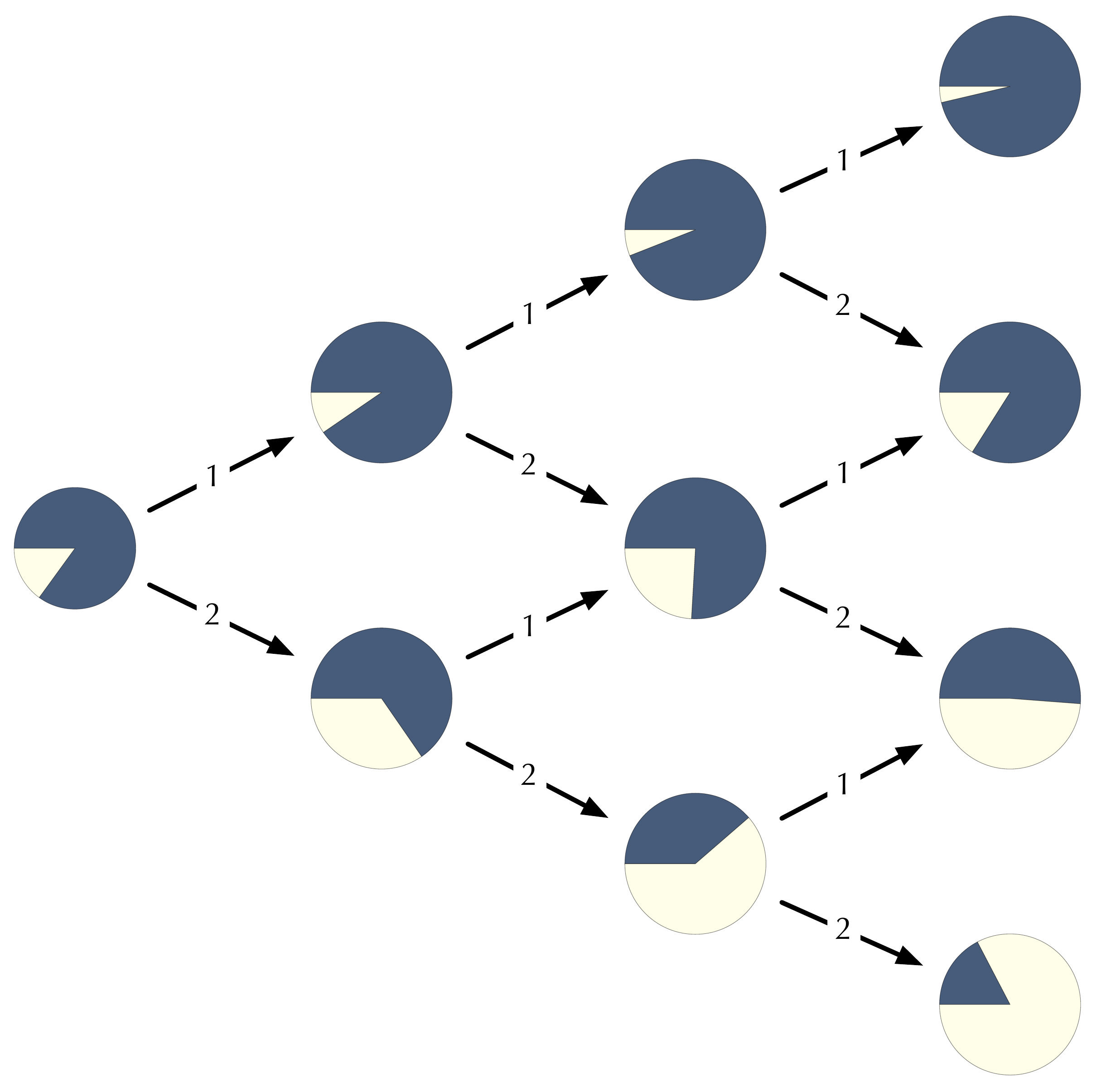}
  \centering
  \caption{Even when an alphabet and prior probabilities satisfy the conditions in \eqref{eq:alwaysm}, repeated measurements can provide enough information to make Bob not choose $\rho_{m}$. In this example Alice's alphabet consists of states $\rho_{1} = \frac{5}{6}\ketbra{0}{0}+\frac{1}{6}\ketbra{1}{1}$ and $\rho_{2} = \ketbra{+}{+}$. The prior probabilities are $q_{1}=0.85$ and $q_{2}=0.15$. The posterior probabilities for $q_{1}$ (dark) and $q_{2}$ (light) over the course of three repeated measurements using POVM $\set{A_{1} = \ketbra{0}{0},A_{2}=\ketbra{1}{1}}$. We have $q_{1}\rho_{1} - q_{2}\rho_{2} \geq 0$, so no single measurement could cause Bob to choose $\rho_{2}$ over $\rho_{1}$. However since $\sqrt{q_{1}}\rho_{1} - \sqrt{\rho_{2}}\rho_{2}\not\geq 0$, there exists a two-measurement record (in this case, receiving the outcome $A_{2}$ on the first two measurements) which can cause Bob to choose $\rho_{2}$.}
  \label{fig:repeatedmeasurements}
\end{figure}

Non-trivial measurements still provide Bob with information, modifying his
posterior knowledge of what Alice sent; however no measurement exists which
could overturn his bias that $\rho_{m}$ is the most probable state, as reflected
in the prior probabilities. 
Indeed it can be argued that performing no measurement is the least useful, since if Alice were to send multiple copies, non-trivial measurements exist which could indeed cause Bob to change his mind, as illustrated in fig~\ref{fig:repeatedmeasurements}.

Since all states contain the same probability of error, it is sensible to perform a second optimisation to select the optimal measurement. 
The condition \eqref{eq:alwaysm} will enforce that $p(N_{m} | \epsilon_{y}) \geq p(N_{k} | \epsilon_{y})$ for all $k=1,\dots,n$, and for any outcome $y$ of any measurement $\epsilon$. 
Therefore, since the optimal decision strategy of minimal error state discrimination is to choose the state with the highest posterior probability, then we know that $p(D_{m}) = 1$. 
This means that using utility functions in the secondary optimisation which depend on Bob's decision will be equally unhelpful. 
A more useful candidate is the mutual information described in section~\ref{sec:other}; which measures how much information the measurement gains about which state Alice sent. Let our multi-objective utility function be
\begin{equation}
U(\epsilon,y,\Gamma,\nu) =  \left(\delta_{\Gamma(y),\nu}\, ,\,\log\left(p\left(N_{\nu}|\epsilon_{y}\right)\right) + H(N)\right).
\end{equation}
The measurement $\epsilon^{\star}$ which maximises this utility function will satisfy
\begin{equation}
\epsilon^{\star} = \argmax_{\epsilon\in \mathcal{S}_\epsilon}\, \max_{\Gamma\in\mathcal{S}_{\Gamma}}\; \left(\sum_{\nu=1}^{n} p(D_{\nu}|N_{\nu})p(N_{\nu})\, ,\, I(N:\epsilon)\right).
\end{equation}
Of course the condition \eqref{eq:alwaysm} implies that $\sum_{\nu=1}^{n} p(D_{\nu}|N_{\nu})p(N_{\nu}) = p(N_{m})$, assuming the optimal decision strategy of choosing the state with the largest posterior distribution. Therefore this utility function will select the measurement $\epsilon^{\star}$ that maximises the mutual information $I(N:\epsilon)$, that is the state which extracts the most information, along with the decision strategy  of choosing the state with the largest posterior distribution.
This is a more sensible choice for the optimal measurement than simply the trivial measurement.

\section{Conclusion} \label{sec:conclusion}

In this paper we showed how quantum state discrimination fits neatly into Bayesian experimental design.
Bayesian experimental design is a very general framework designed to give a methodology to finding an optimal experiment. 
Since quantum state discrimination is typically concerned with finding the optimal measurement to discriminate between quantum states, it is unsurprising that quantum state discrimination can be cast into this framework.
From this perspective the standard paradigms of minimal error state discrimination and maximal confidence state discrimination are the optimal measurements corresponding to different utility functions.

The utility function encodes the priorities of the experimenter. In minimal error state discrimination, the priority is to make as few mistakes as possible. In maximal confidence state discrimination, the priority is having the maximal possible assurance that each measurement outcome is accurate.
This suggests that one can consider many other priorities by changing the utility function.
As an example, we showed how any the mutual information between the distribution of the states being discriminated and the measurement outcomes arises from a utility function. 

Quantum state discrimination is beginning to be seen as useful in the context of resource theories.
In particular, discrimination tasks can be used to measure the quality of a quantum resource \cite{Takagi2019}.
This is usually performed in the context of minimal error state discrimination, where it is known that probability of success \eqref{eq:avesuccess} can fully describe the resource theory of quantum measurements \cite{Skrzypczyk2019,Guff2019}.
However, the set of all state discimination tasks is an uncountable set of monotones, making calculation difficult. In this paper we showed that the total confidence is also a resource monotone.
In fact we provides simple conditions on the utility function which guarantee that the averaged utility function $U(\epsilon)$ \eqref{eq:Uep} is a resource monotone.
This could lead to a wider class of functions that can be used to classify the resource theory of quantum measurement; potentially leading to a finite family of monotones that fully describes this resource theory.

If the average utility $U(\epsilon)$ can be shown to be a resource monotone, then we showed in section~\eqref{sec:other} that this implies that $U(\epsilon)$ there exists an optimal POVM consisting only of rank-1 elements.
This result might be useful in restricting the set of possible measurements one needs to search over in trying to find the optimum.

In order to incorporate quantum state discrimination into Bayesian experimental design, we generalised the framework in two directions.
Firstly, we performed a maximisation over all \emph{decision strategies}, instead simply over individual decisions.
This was necessary to describe maximal confidence state discrimination, whose utility function depends upon the entire decision strategy. 
When the utility function depends only on the explicit decision made, the framework reduces to the standard expression.
Secondly, we allowed the utility function to be $\mathbb{R}^{n}$-valued, equipped with the dictionary order.
This allows one to optimise multiple priorities in turn, subject to all previous priorities being optimised.
We showed the benefit of this multi-objective optimisation in two cases. 
The first is finding the best out of the usual family of maximal confidence measurements. 
The second was the case in which all measurements have the same probability of success. 
In this case minimal error state discrimination is entirely unhelpful.
A secondary optimisation of the mutual information will select the experiment which on average extracts the most information.

Now that quantum state discrimination has been cast into the framework of Bayesian experimental design, the analytic or numerical methods used to solve Bayesian experimental design optimisation problems may be useful in finding the optimal measurement for many quantum state discrimination tasks.

\begin{acknowledgements}
We would like to thank a referee in the first round of review of this manuscript which suggested the argument for combining POVM elements corresponding to the same decision without changing the probability of success.
  This research was funded in part by the Australian Research Council Centre of
  Excellence for Engineered Quantum Systems (Project number CE170100009). 
\end{acknowledgements}

\bibliography{references}

\end{document}